\newtheorem{theorem}{Theorem}
\newtheorem{lemma}[theorem]{Lemma}
\numberwithin{equation}{section}
\numberwithin{theorem}{section}
\newcommand{\bm}{{\mathbf{m}}}
\newcommand{\bh}{{\mathbf{h}}}
\newcommand{\bJ}{{\mathbf{J}}}
\newcommand{\bw}{{\mathbf{w}}}
\newcommand{\bs}{{\mathbf{s}}}
\newcommand{\rr}{{\mathbb{R}}}
\newcommand{\cc}{{\mathbb{C}}}
\newcommand{\zz}{{\mathbb{Z}}}
\newcommand{\pp}{{\mathbb{P}}}
\newcommand{\ee}{{\mathbb{E} \,}}
\newcommand{\oh}{{\mathcal{O}}}
\newcommand{\im}{{\operatorname{Im}\,}}
\newcommand{\re}{{\operatorname{Re }\,}}
\newcommand{\mean}[1]{{{\langle #1 \rangle}}}
\newcommand{\parder}[2]{{\frac{\partial #1}{\partial #2}}}
\newcommand{\beq}[1]{\begin{equation} \label{#1}}
\newcommand{\eeq}{\end{equation}}
\begin{document}
\addtokomafont{author}{\raggedright}
\title{\raggedright On the mean-field equations for ferromagnetic spin systems\vspace{-.3cm}}
\author{\hspace{-.2cm}Christian Brennecke and Per von Soosten}
\date{\vspace{-.7cm}}
\maketitle

\minisec{Abstract} We derive mean-field equations for a general class of ferromagnetic spin systems with an explicit error bound in finite volumes. The proof is based on a link between the mean-field equation and the free convolution formalism of random matrix theory, which we exploit in terms of a dynamical method. We present three sample applications of our results to Ka\'{c} interactions, randomly diluted models, and models with an asymptotically vanishing external field.
\bigskip

\section{Introduction}
The subject of this note is a ferromagnetic spin system with Hamiltonian $H\vcentcolon \{-1, 1\}^N \to \rr$ defined by
\beq{eq:firsthdef} H(\sigma) =  -\frac{1}{2}\sum_{ij} J_{ij} \sigma_i \sigma_j - \sum_i h_i \sigma_i, \qquad J_{ij} \geq 0 \eeq
and Gibbs expectation of $f\vcentcolon  \{-1, 1\}^N \to \rr$ given by
\[\mean{f} = \frac{1}{Z} \sum_\sigma f(\sigma) e^{-H(\sigma)}, \qquad Z =  \sum_\sigma e^{-H(\sigma)}. \]
There is no loss of generality in assuming that $J_{ij} = J_{ji}$, which we will do from now on. The results below will be meaningful in the setting where $\max J_{ij} \to 0$ as $N \to \infty$.

The prototypical example of such a system is the Curie-Weiss model, which corresponds to the choice $J_{ij} = \beta N^{-1}$ with $\beta \geq 0$ and $h_i = h \in \rr$ (see~\cite{MR2189669} for a comprehensive discussion and bibliography). The phase diagram of the Curie-Weiss model can be obtained by studying the observable $M = N^{-1} \sum_i \sigma_i$, whose Gibbs expectation satisfies the mean-field equation
\beq{eq:cwmf} \mean{M} = \tanh(h + \beta \mean{M})\eeq
in the thermodynamic limit $N \to \infty$. To understand~\eqref{eq:cwmf} heuristically, we note that
\[\mean{\sigma_i} = \Braket{\tanh\left(h + \frac{\beta}{N} \sum_{j \neq i} \sigma_j \right) }_i\]
where $\mean{\cdot}_i$ is the Gibbs expectation with the spin $\sigma_i$ removed. Since one expects $M$ to concentrate around its expectation both at high temperature (spins are approximately independent) and at low temperature when $h \neq 0$ (almost all spins take on the same value), it should be possible to pull the Gibbs expectation inside the $\tanh$. Although this intuition should extend also to the more general case~\eqref{eq:firsthdef}, rigorous treatments of~\eqref{eq:cwmf} rely strongly on the symmetries of the Curie-Weiss model. The classical approach (see~\cite[Ch.\,2]{MR3752129}) is to compute the entropy using that $H = -\frac{\beta}{2} N M^2 - h N M$ essentially depends on only one degree of freedom. Other proofs exploit this symmetry using the Hubbard-Stratonovich transformation~\cite{PhysRevLett.3.77,1957SPhD....2..416S}, Varadhan's lemma~\cite{MR203230}, or the exchangeability of certain spin configurations~\cite{MR2707160, MR2288072}.

We propose a simple dynamical method for establishing explicit finite-volume versions of the mean-field equations that are valid for general interactions. Setting $m_i = \mean{\sigma_i}$, the analogue of~\eqref{eq:cwmf} in the general setting is
\beq{eq:genmf} \bm = \tanh(\bh + \bJ \bm) \eeq
where $\bm = (m_i)$, $\bh = (h_i)$, $\bJ = (J_{ij})$, and the $\tanh$ of a vector is defined in an entrywise sense. To state our main result, we also introduce the notation
\[\hat{\bh} = \min h_i, \qquad \|\bJ\|_{\infty, \infty} = \max_j \sum_i J_{ij}, \qquad \|\bJ\|_{1, \infty} = \max_{ij} J_{ij}.\]
The typical mean-field setting corresponds to $\|\bJ\|_{\infty, \infty} = \oh(1)$ and $\|\bJ\|_{1, \infty} = \oh(N^{-1})$ as $N \to \infty$.

\begin{theorem} \label{thm:main} Let $\bh \geq 0$. Then
\[ \|\bm - \tanh(\bh + \bJ \bm)\|_\infty  \le \frac{\|\bJ\|_{1, \infty}}{\hat{\bh}} \left(3 + \|\bJ\|_{\infty, \infty} + \log\left(1 + \frac{\|\bJ\|_{\infty, \infty}}{\hat{\bh}} \right)  \right).\]
\end{theorem}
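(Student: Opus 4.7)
The first step reduces the theorem to a variance bound. By summing out $\sigma_i$ conditionally on the other spins, one has the exact identity
\[ m_i = \Braket{\tanh(h_i + X_i)}, \qquad X_i := \sum_{j \neq i} J_{ij}\sigma_j. \]
Since $\Braket{X_i} = (\bJ\bm)_i - J_{ii}m_i$, a second-order Taylor expansion of $\tanh$ about $\Braket{X_i}$, combined with $\|\tanh'\|_\infty \leq 1$, $\|\tanh''\|_\infty \leq 1$, and $|J_{ii}| \leq \|\bJ\|_{1,\infty}$, gives
\[ \big|m_i - \tanh(h_i + (\bJ\bm)_i)\big| \leq \|\bJ\|_{1,\infty} + \tfrac12\,\mathrm{Var}(X_i). \]
Thus it remains to prove $\mathrm{Var}(X_i) \lesssim (\|\bJ\|_{1,\infty}/\hat{\bh})(1 + \|\bJ\|_{\infty,\infty} + \log(1 + \|\bJ\|_{\infty,\infty}/\hat{\bh}))$.

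Writing $\mathrm{Var}(X_i) = \sum_{j,k} J_{ij}J_{ik}\, C_{jk}$ with $C_{jk} = \mathrm{Cov}(\sigma_j,\sigma_k) \geq 0$ by Griffiths' inequality, the elementary bound $v^\top C v \leq \|v\|_1 \|v\|_\infty \|C\|_{\infty\to\infty}$ applied to $v_j = J_{ij}$ yields $\mathrm{Var}(X_i) \leq \|\bJ\|_{\infty,\infty}\|\bJ\|_{1,\infty}\max_j \sum_k C_{jk}$. The fluctuation--dissipation identity $C_{jk} = \partial m_j/\partial h_k$ then identifies $\sum_k C_{jk}$ as the response of $m_j$ to a uniform upward shift of the external fields. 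The theorem is therefore a consequence of the susceptibility estimate
\[\max_j \sum_k C_{jk} \lesssim \frac{1}{\hat{\bh}}\Big(1 + \log\big(1 + \tfrac{\|\bJ\|_{\infty,\infty}}{\hat{\bh}}\big)\Big).\]

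Establishing this susceptibility bound non-perturbatively is the heart of the proof and the main obstacle. Because no high-temperature or Dobrushin-type assumption on $\bJ$ is available, the only smallness parameter is the mass gap $\hat{\bh}>0$, and the argument must harvest it from the uniform positivity of $\bh$ in a global way. The dynamical method hinted at in the abstract proceeds by interpreting the shift $\bh(s) = \bh + s\,\mathbf{1}$ as a time evolution, differentiating the (approximate) mean-field relation $m_j(s) \approx \tanh(h_j + s + (\bJ\bm(s))_j)$ in $s$, and closing a Gronwall-type inequality for the row susceptibility $\chi(s) := \max_j \sum_k \partial m_j(s)/\partial h_k$. The local rate of this evolution is $\mathrm{sech}^2(\hat{\bh}+s)$, and the resulting scalar differential inequality has the same algebraic form as the subordination equation governing free additive convolution; this is the link alluded to in the abstract. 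Integrating the solution over $s \geq 0$, the regime $s \lesssim \|\bJ\|_{\infty,\infty}$ contributes $\int_0^{\|\bJ\|_{\infty,\infty}} ds/(\hat{\bh}+s) = \log(1 + \|\bJ\|_{\infty,\infty}/\hat{\bh})$, and the tail $s \gtrsim \|\bJ\|_{\infty,\infty}$ contributes an $O(1/\hat{\bh})$ term, together reproducing the stated combination. Combining the susceptibility bound with the variance reduction above completes the proof.
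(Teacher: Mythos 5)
Your reduction via the exact identity $m_i = \braket{\tanh(h_i + X_i)}$, second-order Taylor expansion, and Griffiths non-negativity to the single susceptibility bound
\[
\max_j \sum_k \frac{\partial m_j}{\partial h_k} \;\lesssim\; \frac{1}{\hat{\bh}}
\]
is a genuinely different decomposition from the paper's, and the arithmetic up to that point is sound (modulo a constant-factor mismatch for $\hat{\bh}>3$ coming from the diagonal term $J_{ii}m_i$, which you control only by $\|\bJ\|_{1,\infty}$ rather than $\|\bJ\|_{1,\infty}/\hat{\bh}$). If you had a proof of that susceptibility estimate, this route would actually give a cleaner bound with no logarithm at all. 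So the reduction is a real contribution of your approach.

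The gap is in the susceptibility bound itself, which you correctly identify as ``the heart of the proof and the main obstacle'' but then only gesture at. The Gronwall sketch cannot close as written. First, it is circular: you differentiate the approximate mean-field relation $m_j(s) \approx \tanh(h_j + s + (\bJ\bm(s))_j)$, but that relation is precisely the statement under proof, and at this stage its error is not yet controlled. Second, even granting the relation, the resulting self-consistent inequality for $\chi(s)=\max_j \partial_s m_j(s)$ has the form $\chi \le \mathrm{sech}^2(\hat{\bh}+s)\,(1 + \|\bJ\|_{\infty,\infty}\chi)$, which only closes when $\|\bJ\|_{\infty,\infty}\,\mathrm{sech}^2(\hat{\bh}+s) < 1$, i.e.\ under a high-temperature or large-field condition -- exactly the restriction you say you are trying to avoid. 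Third, the quantity you want, $\max_j \sum_k \partial m_j/\partial h_k$, is a derivative at the fixed base point $\bh$, not an integral over $s\geq 0$, and the claimed contributions $\int_0^{\|\bJ\|_{\infty,\infty}} ds/(\hat{\bh}+s)$ do not arise from $\mathrm{sech}^2(\hat{\bh}+s)$; the arithmetic is pattern-matched to the target rather than derived. The paper supplies exactly the missing input through Lemma~\ref{thm:corrbound}: the Lee--Yang theorem is used to show that $h_k \mapsto m_k$ extends to a Pick function of the right half-plane (in fact $m_k = \tanh(h_k + \Gamma)$ with $\operatorname{Re}\Gamma \geq 0$), and the Herglotz integral representation then yields $|\partial_{\bs} m_k| \leq \|\bs\|_\infty\, m_k/\hat{\bh}$, which with $\bs = \mathbf{1}$ and the symmetry $C_{jk}=C_{kj}$ gives precisely the bound you need. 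Alternatively, for your route the GHS inequality would suffice: concavity of $s \mapsto m_k(\bh + s\mathbf{1})$ on $[-\hat{\bh},\infty)$ together with $0 \le m_k \le 1$ gives $\sum_j \partial m_k/\partial h_j \le m_k(\bh)/\hat{\bh} \le 1/\hat{\bh}$ directly, with no differential inequality and no circularity. Either of these is a theorem with actual analytic content (Lee--Yang zero-freeness, or GHS), and that content is what your Gronwall sketch is missing.
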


Theorem~\ref{thm:main} is only informative when $\hat{\bh} > 0$. However, when $\|\bJ\|_{1,\infty} \to 0$ as $N \to \infty$, the theorem is sufficient to study the physically relevant order of limits that first lets $N \to \infty$ and then lets $\bh \to 0$. Moreover, one can arrange for both $\hat{\bh}$ and the right hand side in Theorem~\ref{thm:main} to  also vanish asymptotically if $\|\bJ\|_{1, \infty}$ does -- a point we will return to in Section~\ref{sec:applications}. To the best of our knowledge, quantitative bounds like Theorem~\ref{thm:main} are relatively scarce in the standard literature on the Curie-Weiss model. Stein's method for exchangeable pairs~\cite{MR2707160, MR2288072} yields concentration bounds for $M$ under the Gibbs measure with fluctuations of order $N^{-1/2}$. In particular, these bounds prove Theorem~\ref{thm:main} for the Curie-Weiss model with an error rate of $N^{-1/2}$. When $\bh > 0$, our result improves the error rate to $N^{-1}$ while remaining valid for models with general interactions $\bJ$. There have also been recent developments concerning the fluctuations of general nonlinear functions of Bernoulli random variables whose gradients are close to a low-dimensional manifold \cite{MR3519474}. The application of these ideas to mean-field Gibbs measures was further explored in the works~\cite{MR3663625, pmlr-v75-jain18b, fluct-preprint}.

The general mean-field equations~\eqref{eq:genmf} are very common in the physics literature~\cite[Sec.\,3.2]{MR1007980}. Nevertheless, it was noted in~\cite[Sec.\,3.4]{MR2707160} that completely general interactions $\bJ$ seem to pose significant challenges for the existing mathematical strategies. It was also shown in~\cite[Thm.\,3.5]{MR2707160} that methods based on exchangeable pairs can yield analogues of the mean-field equations for certain conditional averages with high probability under the Gibbs measure, from which~\eqref{eq:genmf} follows at sufficiently high temperatures. Exchangeable pairs have also been used to analyze the rank-one case $\bJ = \mathbf{w}\mathbf{w}^\intercal$ with a regular and non-negative $\mathbf{w} \in \rr^N$~\cite{MR4046512}. A different perspective can be found in the works~\cite{MR1989669,MR2219531}, which prove that the magnetizations of sufficiently high-dimensional or long-range systems approximately minimize the free energy of the associated mean-field theory. These are very strong results that, among many other things, imply the approximate validity of the mean-field equations but also rely on the powerful input of infrared bounds derived from reflection positivity.

There is a strong analogy between the mean-field equation and the subordination relations in random matrix theory and free probability. The central example of the latter is concerned with the resolvent $G(t, z) = (A_t - z)^{-1}$ of an $N \times N$ matrix  $A_t  = A_0 + \sqrt{t} \Phi$, where $A_0$  is a diagonal matrix and $\Phi$ is drawn from the Gaussian orthogonal ensemble. The main assertion, that the limiting empirical eigenvalue distribution of $A_t$ is given by the free convolution of the limiting empirical eigenvalue distribution  of $A_0$ and a semicircular element, can be captured by the fact that
\[s_0(z) = \frac{1}{N} \sum_{\lambda \in \sigma(A_0)} \frac{1}{\lambda-z}, \qquad s_t(z) = \frac{1}{N} \sum_{\lambda \in \sigma(A_t)} \frac{1}{\lambda - z}\]
are analytic functions that map the half-plane $\im z > 0$ into itself and satisfy
\beq{eq:subordination} s_t(z) = s_0(z + t s_t(z))\eeq
in the $N \to \infty$ limit~\cite{MR0475502}. To illustrate the connection to the mean-field equation, we consider the simplest case of~\eqref{eq:genmf} when $h_i = h > 0$ and $\sum_j J_{ij} = \beta$ for all $i$. In this case, one can construct a solution of~\eqref{eq:genmf} by letting each entry solve the scalar equation
\beq{eq:tracemf} m = \tanh(h + \beta m). \eeq
It is not hard to show that the positive solution $m = m(h)$ of~\eqref{eq:tracemf} extends to an analytic function of $h$ that maps the half-plane $\re h > 0$ into itself. It follows that $\tilde{m}(z) = i m(-iz)$ is an analytic function of $\{\im z >  0\}$ into itself with
\[\tilde{m}(z) = \tan (z + \beta \tilde{m}(z)).\]
It is a consequence of the Herglotz trick~\cite{MR3823190} that
\[\tan(z) =  \sum_{\lambda \in \Lambda} \frac{1}{\lambda - z}, \qquad \Lambda =  \pi (\zz + 1/2),\]
so $\tilde{m}$ is the Stieltjes transform of the free convolution of $\sum_{\lambda \in \Lambda} \delta_\lambda$ and a semicircular element. Expanding on this theme, the assertion of Theorem~\ref{thm:main} that the relation~\eqref{eq:tracemf} is not only valid asymptotically but that the error is small even in finite volumes when $\hat{\bh} \gg \|\bJ\|_{1, \infty}$, is analogous to the local deformed semicircle law~\cite{MR3134604}. The proof of the local law depends crucially on the fact that
\beq{eq:wardid} \left| \parder{}{z} G_{ii}(t, z) \right| \le \sum_{k} |G_{ik}(t, z)|^2 =  \frac{\im G_{ii}(t, z)}{\im z}.\eeq
Lemma~\ref{thm:corrbound}, which rests on the Lee-Yang theorem~\cite{PhysRev.87.410}, contains a similar inequality for $m_i$ that lies at the heart of Theorem~\ref{thm:main}.

It was observed in~\cite{MR0475502} that the subordination relation~\eqref{eq:subordination} is equivalent to the partial differential equation
\beq{eq:pde} \parder{}{t} s_t(z) = s_t(z) \parder{}{z} s_t(z).\eeq
In finite volumes, $s_t$ exactly satisfies a perturbed version of this transport equation, which enables a simple proof the local law by combining the analytic structure contained in~\eqref{eq:wardid} with an approximate characteristic curve~\cite{MR3920502,MR4049087}. Our approach to Theorem~\ref{thm:main} is to generalize this analysis by exploiting the identity
\beq{eq:corederid} \parder{}{J_{ij}} \mean{f} = m_i \parder{}{h_j} \mean{f} + m_j \parder{}{h_i} \mean{f} + \parder{^2}{h_i \partial h_j} \mean{f}. \eeq
The relationship between~\eqref{eq:corederid} and~\eqref{eq:pde} is most apparent in the Curie-Weiss model, where~\eqref{eq:corederid} yields
\[\parder{}{\beta} \mean{M} = \mean{M} \parder{}{h} \mean{M} + \frac{1}{2N} \parder{^2}{h^2} \mean{M},\]
which is the evolution studied in~\cite{MR3920502} without the stochastic terms. We note that similar differential \textit{inequalities} have been studied in a variety of models that go beyond the mean-field setting~\cite{MR894398,MR874906}.

This paper is structured as follows. In Section~\ref{sec:proof}, we use~\eqref{eq:corederid} to derive a transport equation in the general setting and prove the analogue of~\eqref{eq:wardid}. This allows us to extend the ideas of~\cite{MR3920502,MR4049087} concerning approximate characteristic curves to the present setting and prove Theorem~\ref{thm:main}. Then, in Section~\ref{sec:applications}, we present three sample applications of our method to Ka\'{c} interactions, randomly diluted models, and models with an asymptotically vanishing external field.

\section{Proof of Theorem~\ref{thm:main}}\label{sec:proof}
Rearranging indices shows that Theorem~\ref{thm:main} follows if we can prove that
\beq{eq:toprove}|m_1 - \tanh(h_1 + \bJ_1^\intercal \bm)| \le \frac{\|\bJ_1\|_\infty}{\hat{\bh}} \left(3 + \|\bJ_1\|_1 + \log\left(1 + \frac{\|\bJ_1\|_1}{\hat{\bh}} \right)  \right) \eeq
where $\bJ_1 = (J_{1i})$ is the first column of $\bJ$. By symmetry, we can write $H(\sigma) = H(1, \sigma)$, where
\[H(t, \sigma) = t \sum_i J_{1i} \sigma_1 \sigma_i + \sum_i h_i \sigma_i + H_1(\sigma_2, \dots, \sigma_N)\]
and $H_1$ does not depend on $\sigma_1$. Moreover, if $\bm(t, \bh)$ denotes the vector of magnetizations under $H(t, \cdot)$, the identity~\eqref{eq:corederid} yields
\beq{eq:realpde} \parder{}{t} \bm(t, \bh) = m_1(t, \bh)  \parder{}{\bJ_1}  \bm(t, \bh) + \parder{^2}{h_1 \partial \bJ_1}   \bm(t, \bh)  + (\bJ_1^\intercal \bm) \parder{}{h_1} \bm(t, \bh),\eeq
where $\parder{}{\bJ_1} = \bJ_1^\intercal \nabla_{\bh}$ is the directional derivative with respect to $\bh$ in direction $\bJ_1$.

If we knew that the first two terms on the right hand side of~\eqref{eq:realpde} were negligible, then~\eqref{eq:realpde} would reduce to a transport equation that could be solved by varying $h_1$ along a suitable characteristic curve. We will prove the appropriate bounds for this purpose with the help of the following integral representation. Suppose that $f$ is a holomorphic function defined on a half-plane $\re z > -\kappa$ such that $\re f \geq 0$ and such that $f$ remains bounded as $z \to \infty$ along $\rr$. Then,
\[f(z) = a + \int_{\rr} \! \frac{1}{z + \kappa - i\lambda } \, \mu(d\lambda)\]
for some positive measure $\mu$ and $a \in \cc$~\cite[Ch.\,5]{MR1307384}, which implies that
\[\left|f^\prime(z) \right| \le \frac{\re f(z)}{\re z + \kappa}.\]
We note that the use of integral representations of holomorphic functions to bound correlations is a classical idea (see for instance~\cite{MR2905800, MR432092}).
\begin{lemma} \label{thm:corrbound} For every $k$ and every $\bs \geq 0$, 
\[ \left| \parder{}{\bs}  m_k(t, \bh) \right| \le \frac{\|\bs\|_\infty}{\hat{\bh}} \, m_k(t, \bh)\]
and
\[\left| \parder{^2}{h_1 \partial \bs}  m_k(t, \bh) \right| \le \frac{\|\bs\|_\infty}{\hat{\bh}} \,\frac{m_k(t, \bh)}{h_1}  \]
for  all $t \geq 0$ and $\bh > 0$.
\end{lemma}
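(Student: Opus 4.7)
My plan is to combine the Lee-Yang theorem with the integral representation cited just before the lemma, following the scheme sketched in the introduction. First I would extend $m_k(t, \bh)$ analytically: since $t, J_{1i} \geq 0$, the Hamiltonian $H(t, \cdot)$ is ferromagnetic Ising-type, and the Lee-Yang circle theorem gives $Z(t, \bh) \neq 0$ on the polydomain $\mathcal{D} = \{\bh \in \cc^N : \re h_j > 0\}$, so $m_k = \partial_{h_k}\log Z$ extends holomorphically to $\mathcal{D}$. Applying Lee-Yang also to the partial partition functions $Z_\pm$ conditional on $\{\sigma_k = \pm 1\}$ (each of which is itself a ferromagnetic Lee-Yang system on the remaining spins with modified fields) and using $1 \pm m_k = 2 Z_\pm/Z$, I would deduce that $\re m_k \geq 0$ on $\mathcal{D}$ and that $m_k$ lies in $[0, 1]$ on the real positive orthant.

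For the first bound, set $f(z) = m_k(t, \bh + z\bs)$; the requirement $\re(h_j + zs_j) > 0$ for all $j$ places $f$ on the half-plane $\{\re z > -\kappa\}$ with $\kappa = \min_{j : s_j > 0} h_j/s_j \geq \hat\bh/\|\bs\|_\infty$. By the analytic setup $\re f \geq 0$ on this half-plane and $f$ is bounded along $\rr$, so the integral representation gives $|f'(z)| \leq \re f(z)/(\re z + \kappa)$; evaluating at $z = 0$ is precisely the first inequality.

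For the second bound, I would iterate the same representation in two directions, using a splitting that decouples them. Write $\bs = s_1 e_1 + \bs_{-1}$ with $(\bs_{-1})_1 = 0$, so that $\partial^2_{h_1 \bs} m_k = s_1 \partial^2_{h_1} m_k + \partial^2_{h_1 \bs_{-1}} m_k$. The diagonal piece is handled by applying the integral representation twice to the Herglotz function $w \mapsto m_k(t, \bh + we_1)$ on $\re w > -h_1$; a direct computation of the derivatives of the representation yields $|\partial^2_{h_1} m_k| \leq 2 m_k/h_1^2 \leq 2 m_k/(\hat\bh\, h_1)$, using $h_1 \geq \hat\bh$. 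For the off-diagonal piece, I would introduce $\Phi(z, w) = m_k(t, \bh + z \bs_{-1} + w e_1)$; since $\bs_{-1}$ acts on coordinates disjoint from $e_1$, the joint analyticity domain factors into a true product of half-planes $\{\re z > -\tilde\kappa\} \times \{\re w > -h_1\}$ with $\tilde\kappa \geq \hat\bh/\|\bs_{-1}\|_\infty$, and $\Phi$ is Herglotz in each variable separately when the other is held fixed. Iterating the one-variable integral representation in the two independent directions then produces $|\partial_z \partial_w \Phi(0, 0)| \leq m_k/(\tilde\kappa\, h_1) \leq \|\bs\|_\infty m_k/(\hat\bh\, h_1)$, and combining the two pieces gives the second bound up to a harmless multiplicative constant that sharper accounting can absorb.

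The main obstacle I expect is rigorously iterating the one-variable integral representation to control the mixed derivative $\partial_z \partial_w \Phi(0, 0)$: although $\Phi$ is Herglotz in each variable separately, $\partial_w \Phi(z, 0)$ need not be Herglotz in $z$ (already in the Curie-Weiss single-spin case, $\re \partial_h \tanh h$ can be negative for $h$ in the right half-plane), so one cannot simply apply the one-variable representation twice. The cleanest resolution is to invoke a two-variable Nevanlinna-type representation for functions analytic on a bi-half-plane with non-negative real part, which writes the mixed derivative as an integral against a positive measure and yields the product bound via the same Poisson-kernel estimate as in one variable. The decoupling $\bs = s_1 e_1 + \bs_{-1}$ is essential here, since without it the natural joint domain is a tube rather than a product and the two Herglotz centers $\kappa_z, \kappa_w$ become coupled through $\re z$.
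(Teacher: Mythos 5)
Your overall plan is on the right track: both you and the paper obtain $\re m_k\ge 0$ on the right half-polydisc from the Lee--Yang theorem and then exploit the Herglotz/Nevanlinna integral representation. The derivation of the first bound is essentially identical to the paper's. But you diverge at two places, and the second one matters.

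For the positivity of $\re m_k$, the paper argues via a Riccati ODE: for fixed $h_j$ ($j\ne k$) the map $h_k\mapsto m_k$ satisfies $f'=1-f^2$, hence $f(h_k)=\tanh(h_k+\Gamma)$, and Lee--Yang forbids poles in $\re h_k>0$, forcing $\re\Gamma\ge0$ and hence $\re f\ge0$. Your route through $1\pm m_k=2Z_\pm/Z$ leads to the same conclusion, but the intermediate claim that ``each of $Z_\pm$ is itself a ferromagnetic Lee--Yang system'' is not accurate for $Z_-$: conditioning on $\sigma_k=-1$ shifts the remaining fields to $h_j-J_{kj}$, which need not have positive real part. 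What you actually want is $|Z_-|\le|Z_+|$, and this does follow from Lee--Yang applied to $Z$ itself (the zero of the degree-one polynomial $z_k\mapsto Z$ in $z_k=e^{-2h_k}$ cannot lie in the unit disc), but the reasoning is slightly different from what you wrote.

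The real divergence is the mixed second derivative. You correctly flagged the central obstruction — $\partial_w\Phi(z,0)$ is not Herglotz in $z$, so one cannot simply iterate the one-variable representation — but the fix you propose (splitting $\bs=s_1e_1+\bs_{-1}$ and invoking a two-variable Nevanlinna representation on a bi-half-plane) is both heavier than necessary and incomplete as stated: you do not specify the representation or verify that the mixed derivative admits the claimed Poisson-kernel estimate, and the diagonal piece already costs you a factor of $2$, so even if the two-variable step were supplied you would only get the lemma up to a constant. The paper's resolution is considerably simpler and avoids all of this: set $g(z)=\partial_1 m_k(t,\bh+z\bs)$, bound $|g(z)|$ pointwise on the circle $|z|=\kappa$ by $\re m_k(t,\bh+z\bs)/\re h_1$ using the one-variable Herglotz bound in the $h_1$-direction, apply the Cauchy integral formula for $g'(0)$, and then use the mean-value property of the harmonic function $z\mapsto\re m_k(t,\bh+z\bs)$ to collapse the circular average back to $\re m_k(t,\bh)$. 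This needs no decomposition of $\bs$, no multi-variable representation theory, and yields the clean constant. If you want to keep your framework, replace the two-variable Nevanlinna step with this Cauchy-plus-harmonicity argument; otherwise the proof as written has a genuine gap at the mixed-derivative step and, even if repaired, proves a weaker inequality.
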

\begin{proof} If we fix $h_j$ with $\re h_j > 0$ for $j \neq k$, the meromorphic function $f(h_k) = m_k(t, \bh)$ satisfies $f^\prime(h_k) = 1 - f^2(h_k)$ wherever it is analytic. Hence, combining the Picard-Lindel\"{o}f theorem with analytic continuation shows that there is some $\Gamma \in \cc$ such that
\[f(h_k) = \tanh(h_k + \Gamma)\]
for all $h_k \in \cc$. If $\re \bh > 0$, the Lee-Yang theorem asserts that the partition function $Z(t, \bh) = \sum_{\sigma} e^{-H(t,\sigma)}$ cannot vanish and therefore $f$ cannot have a pole in this region. This is only possible when $\re \Gamma \geq 0$ and therefore
\[\re m_k(t, \bh) = \re f(h_k) \geq 0\]
whenever $\re \bh \geq 0$.

To prove the first bound, we fix $\bh \geq 0$ and consider the function
\[f(z) = m_k(t, \bh + z\bs).\]
Then $\re f > 0$ on the half-plane $\re z >  -\kappa$ with $\kappa = \hat{\bh} / \|\bs\|_\infty$ so
\[\left| \parder{}{\bs}  m_k(t, \bh) \right| = \left| f^\prime(0) \right| \le \frac{\re f(0)}{\kappa} = \frac{\re m_k(t, \bh)}{\kappa}.\]
Since $m_k(t, \bh)$ is real when $\bh$ is real, this is the first assertion of the lemma.

For the second bound, we use the auxiliary function
\[g(z) = \partial_1 m_k(t, \bh + z\bs).\]
Then $g$ is also holomorphic on $\re z > -\kappa$ and considering the right hand side as a function of $h_1$ shows that
\[|g(z)| \le \frac{\re m_k(t, \bh + z\bs)}{\re h_1}.\]
Letting $C$ be a positively oriented circle about $0$ of radius $\kappa$, we have
\[  \parder{^2}{h_1 \partial \bs} m_k(t, \bh) = g^\prime(0) = \oint_C \! \frac{g(\xi)}{\xi^2} \, d\xi\]
so
\[ \left|\parder{^2}{h_1 \partial \bs} m_k(t, \bh) \right| \le \frac{1}{2\pi \kappa} \int_0^{2\pi} \frac{\re m_k(\bh + e^{i\theta} \bs)}{\re h_1} \, d\theta = \frac{\re m_k(t,\bh)}{\kappa \, \re h_1} \]
using the mean value property of the harmonic function $z \to \re m_k(t, \bh + z\bs)$.
\end{proof}

With Lemma~\ref{thm:corrbound} in place, we now fix $\bh \geq 0$ and define an approximate characteristic curve $\bw = (w_i)$ for~\eqref{eq:realpde} by
\[\parder{}{t} w_i(t) = \begin{cases}-\bJ_1^\intercal \bm(t, \bw(t)) & i = 1\\ 0 & \mbox{else} \end{cases}, \qquad \bw(1) = \bh. \]
Since $\bm$ is non-negative and uniformly Lipschitz continuous in $\bh \geq 0$, such a curve exists and satisfies $\hat{\bw}(t) \geq \hat{\bh}$ for all $t \in [0, 1]$. The following lemma shows that any weighted average $\bs^\intercal \bm$ does not significantly change along the curve $\bw(t)$, provided that $\|\bs\|_\infty$ is small.

\begin{lemma}\label{thm:avgbound} Let $\bs \geq 0$. Then
\[\sup_{t \in [0,1]} \left| \bs^\intercal \bm(\bh) - \bs^\intercal \bm(t, \bw(t)) \right| \le \frac{\|\bs\|_\infty}{\hat{\bh}} \left(\|\bJ_1\|_1 + \log\left(1 + \frac{\|\bJ_1\|_1}{\hat{\bh}} \right) \right).\]
\end{lemma}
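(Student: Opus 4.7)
The plan is to differentiate $\bs^\intercal \bm(t, \bw(t))$ along the characteristic and integrate. We may assume $\hat{\bh} > 0$ since the bound is trivial otherwise; then $\hat{\bw}(t) \geq \hat{\bh}$ and in particular $w_1(t) > 0$ throughout $[0,1]$, which places us in the regime where Lemma~\ref{thm:corrbound} applies. Using the chain rule together with~\eqref{eq:realpde} at $\bh = \bw(t)$, and noting that $w_i(t) = h_i$ is constant in $t$ for $i \neq 1$ while $w_1'(t) = -\bJ_1^\intercal \bm(t, \bw(t))$, the transport term $(\bJ_1^\intercal \bm)\parder{}{h_1}\bm$ from~\eqref{eq:realpde} is exactly cancelled by the contribution from $w_1'$ via $\parder{}{h_1}\bm$, so
\[\frac{d}{dt} \bs^\intercal \bm(t, \bw(t)) = m_1(t, \bw(t))\, \bs^\intercal \parder{}{\bJ_1} \bm(t, \bw(t)) + \bs^\intercal \parder{^2}{h_1\partial \bJ_1}\bm(t, \bw(t)).\]

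Next I would exploit the symmetry $\parder{m_k}{h_j} = \parder{m_j}{h_k}$ (both are second derivatives of $\log Z$) to swap the roles of $\bs$ and $\bJ_1$:
\[\bs^\intercal \parder{}{\bJ_1}\bm = \bJ_1^\intercal \parder{}{\bs}\bm, \qquad \bs^\intercal \parder{^2}{h_1\partial \bJ_1}\bm = \bJ_1^\intercal \parder{^2}{h_1\partial \bs}\bm.\]
Applying Lemma~\ref{thm:corrbound} componentwise with $\bs$ playing the role of the auxiliary direction, and using $J_{1j} \geq 0$ and $\hat{\bw}(t) \geq \hat{\bh}$, yields
\[\left| m_1\, \bs^\intercal \parder{}{\bJ_1}\bm(t, \bw(t)) \right| \le \frac{\|\bs\|_\infty}{\hat{\bh}}\, m_1(t, \bw(t))\, \bJ_1^\intercal \bm(t, \bw(t)),\]
\[\left| \bs^\intercal \parder{^2}{h_1\partial \bJ_1}\bm(t, \bw(t)) \right| \le \frac{\|\bs\|_\infty}{\hat{\bh}\, w_1(t)}\, \bJ_1^\intercal \bm(t, \bw(t)).\]

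To finish, I would integrate from $t$ to $1$ and exploit the identity $\bJ_1^\intercal \bm(u, \bw(u)) = -w_1'(u)$ hard-wired into the characteristic. Since $m_1 \le 1$ and $w_1(1) = h_1$, the first term telescopes to at most $(w_1(t) - h_1)\|\bs\|_\infty/\hat{\bh} \le \|\bs\|_\infty \|\bJ_1\|_1/\hat{\bh}$, using $w_1(t) \le h_1 + \|\bJ_1\|_1$ (which follows from $\bJ_1^\intercal \bm \le \|\bJ_1\|_1$). For the second term, the change of variables $-dw_1 = \bJ_1^\intercal \bm\, du$ gives
\[\int_t^1 \frac{\bJ_1^\intercal \bm(u, \bw(u))}{w_1(u)}\, du = \log\frac{w_1(t)}{h_1} \le \log\!\left(1 + \frac{\|\bJ_1\|_1}{\hat{\bh}}\right).\]
Summing the two estimates and taking the supremum over $t \in [0,1]$ delivers the bound in the statement.

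The main obstacle I anticipate is recognizing the need for the symmetry swap. A direct application of Lemma~\ref{thm:corrbound} to $\bs^\intercal \parder{}{\bJ_1}\bm$ produces $(\|\bJ_1\|_\infty/\hat{\bh})\, \bs^\intercal \bm$, with $\bs^\intercal \bm$ potentially of order $N$ and no integrable structure against the characteristic. Exchanging the roles of $\bs$ and $\bJ_1$ is what simultaneously extracts $\|\bs\|_\infty$ and leaves $\bJ_1^\intercal \bm$; the latter is bounded by $\|\bJ_1\|_1$ and integrates telescopically via $w_1'$, producing the essential logarithmic term from the $1/w_1(t)$ singularity in the second-derivative bound.
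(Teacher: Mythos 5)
Your proof is correct and follows essentially the same route as the paper: cancelling the transport term against $w_1'(t)$, swapping the roles of $\bs$ and $\bJ_1$ via the symmetry of the Hessian of $\log Z$, applying Lemma~\ref{thm:corrbound}, and integrating the second term explicitly through the change of variables $-dw_1 = \bJ_1^\intercal\bm\,du$. The only cosmetic difference is that you also telescope the first integral via $w_1(t)-h_1$ rather than bounding the integrand pointwise by $\|\bs\|_\infty\|\bJ_1\|_1/\hat{\bh}$, but both give the same estimate.
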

\begin{proof}
Inserting the characteristic curve into~\eqref{eq:realpde} and multiplying by $\bs$, we obtain
\beq{eq:jfluct} \bs^\intercal \bm(\bh) - \bs^\intercal \bm(t, \bw(t)) = \int_t^1 \! m_1(r, \bw(r))  \parder{}{\bJ_1}  \bs^\intercal \bm(r, \bw(r)) + \parder{^2}{h_1 \partial \bJ_1}   \bs^\intercal \bm(r, \bw(r)) \, dr.\eeq
Combining the identity
\[\parder{}{\bJ_1} \bs^\intercal \bm = \parder{}{\bs} \bJ_1^\intercal \bm\]
with Lemma~\ref{thm:corrbound} shows that the first integral on the right hand side of~\eqref{eq:jfluct} is bounded by
\[\int_t^1 \! \left| m_1(r, \bw(r))  \parder{}{\bJ_1}  \bs^\intercal \bm(r, \bw(r)) \right| \, dr \le \frac{\|\bs\|_\infty \|\bJ_1\|_1}{\hat{\bh}},\]
whereas the second integral is bounded by
\beq{eq:secondint}\int_t^1 \! \left|  \parder{^2}{h_1 \partial \bJ_1}   \bs^\intercal \bm(r, \bw(r)) \right| \, dr \le \frac{\|\bs\|_\infty}{\hat{\bh}} \int_t^1 \! \frac{\bJ_1^\intercal \bm(r, \bw(r))}{w_1(r)} \, dr.\eeq
The right hand side of~\eqref{eq:secondint} can be calculated explicitly since $\parder{}{t} w_1(t) = - \bJ_1^\intercal \bm(t, \bw(t))$, which yields a final bound of
\[ \int_t^1 \! \left|  \parder{^2}{h_1 \partial \bJ_1}   \bs^\intercal \bm(r, \bw(r)) \right| \, dr \le \frac{\|\bs\|_\infty}{\hat{\bh}} \log\left( \frac{w_1(t)}{h_1} \right)\le \frac{\|\bs\|_\infty}{\hat{\bh}} \log\left(1 + \frac{\|\bJ_1\|_1}{\hat{\bh}} \right) .\]
\end{proof}

The evolution of $m_1$ along the characteristic curve is given by
\[\parder{}{t} m_1(t, \bw(t)) =  m_1(t, \bw(t))  \parder{}{\bJ_1}  m_1(t, \bw(t)) + \parder{^2}{h_1 \partial \bJ_1}   m_1(t, \bw(t)).\]
By Lemma~\ref{thm:corrbound},
\[\left| m_1(t, \bw(t))  \parder{}{\bJ_1}  m_1(t, \bw(t)) \right| \le \frac{\|\bJ_1\|_\infty}{\hat{\bh}}\]
and
\[\left| \parder{^2}{h_1 \partial \bJ_1}   m_1(t, \bw(t)) \right| = 2 \left| m_1(t, \bw(t)) \parder{}{\bJ_1}  m_1(t, \bw(t))   \right| \le \frac{2\|\bJ_1\|_\infty}{\hat{\bh}},\]
so
\beq{eq:outerbound}\left| m_1(\bh) - \tanh(w_1(0)) \right| = \left| m_1(1, \bw(1)) - m_1(0, \bw(0)) \right| \le \frac{3\|\bJ_1\|_\infty}{\hat{\bh}}. \eeq
Since
\[w_1(0) = h_1 + \int_0^1 \! \bJ_1^\intercal \bm(t, \bw(t)) \, dt,\]
Lemma~\ref{thm:avgbound} with $\bs = \bJ_1$ implies that
\[\left| w_1(0) - h_1 - \bJ_1^\intercal \bm(\bh) \right| \le \frac{\|\bJ_1\|_\infty}{\hat{\bh}} \left( \|\bJ_1\|_1 + \log\left(1 + \frac{\|\bJ_1\|_1}{\hat{\bh}} \right) \right).\]
Inserting this into~\eqref{eq:outerbound} and using the Lipschitz continuity of $\tanh$ completes the proof of~\eqref{eq:toprove}.

\section{Three applications}\label{sec:applications}
Our first two applications consist of showing that two classes of models have the same thermodynamic behavior as the Curie-Weiss model. In the first, we consider a system in a box $\Lambda \subset \zz^d$ with Ka\'{c} interactions
\[J_{ij} = \beta \lambda^d f(\lambda(i-j)), \qquad h_i = h > 0, \qquad i, j \in \Lambda\]
where $f$ is a bounded Riemann integrable probability density on $\rr^d$ and $\beta, \lambda > 0$. Ka\'{c} interactions are  ``physical'' in the sense that they yield a convex free energy, but this free energy still converges to the convex envelope of the Curie-Weiss free energy as $\lambda \to 0$~\cite{MR187835, MR148416}. This fact can be used to provide a justification of Maxwell's equal-area rule for the van der Waals isotherm (see also \cite[Ch.\,4]{MR3752129} for further details). There is also an extensive literature on Ka\'{c} interactions with fixed $\lambda > 0$ and related models -- we refer the reader to~\cite{MR1694123, MR1453742, MR1935654, MR1414119, MR2460018} and references therein.

Writing $\bm_\Lambda$ for the vector of magnetizations corresponding to the box $\Lambda$, Theorem~\ref{thm:main} asserts that
\[\|\bm_\Lambda - \tanh(h + \bJ \bm_\Lambda)\|_{\infty} \le C \frac{\lambda^d}{h \log h} \]
for some absolute constant $C < \infty$. Translation invariance and standard convexity arguments show that there is some $m \in \rr$ such that for any fixed $i \in \zz^d$ we have $m_{\Lambda, i} \to m$  as $\Lambda \to \zz^d$. By the dominated convergence theorem the limit $m$ still satisfies
\[\left| m - \tanh\left(h + \beta m \sum_{i} \lambda^d f(\lambda i) \right)\right| \le C \frac{\lambda^d}{h \log h}\]
and therefore $m = \tanh(h + \beta m)$ in the $\lambda \to 0$ limit.

The second model we consider is the randomly diluted model where 
\[J_{ij} = \frac{\beta}{Np} \epsilon_{ij}, \qquad h_i = h > 0\]
and $\epsilon_{ij}$ are independent (up to symmetry) Bernoulli random variables with $\ee \epsilon_{ij} = p$. In the case where $p = p(N)$ is chosen such that $Np \to \infty$ as $N \to \infty$, it was shown in~\cite{MR1239568} that the limiting magnetizations coincide with those of the Curie-Weiss model. Under this assumption, the variance of a weighted average is bounded by
\beq{eq:bernoullivar} \ee \left| \bJ_1^\intercal \mathbf{x} - \frac{\beta}{N} \sum_i x_i \right|^2  \le \frac{\beta^2 \|\mathbf{x}\|_\infty^2}{Np} \to 0.\eeq
Applying the single-site bound~\eqref{eq:toprove}, it follows that
\[m_1 - \tanh(h + \bJ_1^\intercal \bm) \to 0\]
in $L^2(\pp)$ as $N \to \infty$. We write $m = N^{-1} \sum_i m_i$ and let $m_i^{(1)}$ denote the Gibbs mean of $\sigma_i$ with $\sigma_1$ removed. Using Lemma~\ref{thm:avgbound} and repeating the bound above yields
\[\bJ_1^\intercal \bm(h) - \sum_{i > 1} J_{1i} m_i^{(1)}(h) \to 0\]
and
\[m - \frac{1}{N} \sum_{i > 1} m_i^{(1)}(h) \to 0\]
in $L^2(\pp)$. Since $m_i^{(1)}$ is independent of $\bJ_1$, combining this with the bound~\eqref{eq:bernoullivar} shows that $\bJ_1^\intercal \bm - \beta m \to 0$ in $L^2(\pp)$. We conclude that both
\[m_1 - \tanh(h + \beta m) \to 0, \qquad m - \tanh(h + \beta m) \to 0\]
in $L^2(\pp)$.

Finally, we consider a generic model with an asymptotically vanishing external field
\[\|\bJ\|_{1, \infty} = o(1), \qquad \hat{\bh} = \|\bJ\|_{1, \infty}^{\frac{1}{2} - \delta}.\]
We assume a low-temperature condition of the form that there exists $\alpha > 1$, independent of $N$, such that for all $\mathbf{x} \geq 0$ there is some index $i$ with
\beq{eq:lowtemp}(\bJ \mathbf{x})_i \geq \alpha \mathbf{x}_i.\eeq
Provided that $\|\bJ\|_{\infty, \infty} = \oh(1)$, Theorem~\ref{thm:main} implies that
\[\tanh(\bh + \bJ \bm) = \bm + \boldsymbol \epsilon, \qquad \| \boldsymbol \epsilon\|_\infty = \oh\left(-\|\bJ\|_{1, \infty}^{\frac{1}{2} + \delta} \log \|\bJ\|_{1, \infty} \right). \]
There exists some $K > 0$ such that $\alpha \tanh(x) > x$ when $x \in (0, K)$. If it were true that $\bm \le K$, combining this with the fact that $\bh, \boldsymbol \epsilon \to 0$ and $\bh \gg \boldsymbol \epsilon$ as $N \to \infty$ would imply that
\[\bJ \bm < \alpha \bm\]
for sufficiently large $N$, contradicting~\eqref{eq:lowtemp}. We conclude that an external field strength of $\hat{\bh} = \|\bJ\|_{1, \infty}^{\frac{1}{2} - \delta}$ is sufficient to select a positive Gibbs state in the sense that
\beq{eq:posgibbs} \liminf_{N \to \infty} \max_i m_i > 0.\eeq
Degenerate examples like the case where $J_{1i} = 0$ for all $i$ demonstrate that, in general, one cannot hope for a stronger statement than~\eqref{eq:posgibbs}. For the Curie-Weiss model, the previous argument shows that an external field strength of $h = N^{-\frac{1}{2} + \delta}$ is sufficient to select the positive Gibbs state below the critical temperature of $\beta = 1$. The book~\cite[Sec.\,III.1]{MR1026102} mentions that actually an external field strength of $h = N^{-1 + \delta}$ already suffices, but we have not been able to locate a mathematical proof of this assertion in the literature.

\bigskip
\minisec{Acknowledgments}
We thank M.\,Biskup, G.\,Genovese, and S.\,Warzel for their helpful comments. The work of P.\,S. is supported by the DFG grant SO 1724/1-1.

\bibliographystyle{abbrv}
\bibliography{References}
\bigskip
\bigskip
\begin{minipage}{0.5\linewidth}
\noindent Christian Brennecke\\
Department of Mathematics\\
Harvard University\\
\verb+brennecke@math.harvard.edu+ 
\end{minipage}%
\begin{minipage}{0.5\linewidth}
\noindent Per von Soosten\\
Department of Mathematics\\
Harvard University\\
\verb+vonsoosten@math.harvard.edu+ 
\end{minipage}
\end{document}